\documentclass[
paper=letter,%
numbers=noendperiod,%
captions=nooneline,%
abstracton,%
DIV=10%
]{scrartcl}\usepackage[]{graphicx}\usepackage[]{color}
\makeatletter
\def\maxwidth{ %
  \ifdim\Gin@nat@width>\linewidth
    \linewidth
  \else
    \Gin@nat@width
  \fi
}
\makeatother

\usepackage{Sweave}

\usepackage[T1]{fontenc}%
\usepackage{lmodern}%
\usepackage[american]{babel}%
\usepackage{microtype}%

\usepackage[
hyperref,%
table%
]{xcolor}%

\usepackage{scrlayer-scrpage}%

\usepackage{eso-pic}%
\usepackage{rotating}%
\usepackage{amsmath}%
\usepackage{mathtools}%
\usepackage{amssymb}%
\usepackage{amsthm}%
\usepackage{thmtools}%
\usepackage{etoolbox}%
\usepackage{bm}%
\usepackage{bbm}%
\usepackage{enumitem}%
\usepackage{graphicx}%
\usepackage{grffile}%
\usepackage{tikz}%
\usepackage{wrapfig}%
\usepackage{tabularx}%
\usepackage{siunitx}%
\usepackage{booktabs}%
\usepackage{multirow}%
\usepackage{vruler}%
\usepackage{fancyvrb}%
\usepackage{textcomp}%
\usepackage{listings}%
\usepackage{csquotes}%
\usepackage[
style=authoryear,%
dashed=false,%
hyperref=true,%
useprefix=true,%
maxnames=2,%
maxbibnames=6,%
uniquename=false%
]{biblatex}%
\usepackage[
hypertexnames=false,%
setpagesize=false,%
pdfborder={0 0 0},%
pdfstartview=Fit,%
bookmarksopen=true,%
bookmarksnumbered=true%
]{hyperref}%
\pagestyle{scrheadings}%
\setkomafont{pageheadfoot}{\normalfont\normalcolor\sffamily}%
\setkomafont{pagenumber}{\normalfont\normalcolor\sffamily}%
\automark{section}%
\setcounter{secnumdepth}{3}%
\setkomafont{captionlabel}{\normalfont\normalcolor\sffamily\bfseries}%

\setlist{%
  align=left,%
  labelsep=*,%
  leftmargin=*,%
  listparindent=\parindent,%
  parsep=0pt,%
  topsep=1mm,%
  itemsep=1mm%
}
\newcommand*{\mysquare}{\rule[0.18em]{0.36em}{0.36em}}
\newcommand*{\mytriangle}{\raisebox{0.12em}{\resizebox{0.48em}{0.48em}{$\blacktriangleright$}}}
\newcommand*{\mybar}{\rule[0.32em]{0.62em}{0.08em}}
\newcommand*{\mydot}{\raisebox{0.14em}{\resizebox{0.44em}{!}{$\bullet$}}}
\setlist[itemize,1]{label={\mysquare\ }}%
\setlist[itemize,2]{label={\mytriangle\ }}%
\setlist[itemize,3]{label={\mybar\ }}%
\setlist[itemize,4]{label={\mydot\ }}%
\setlist[enumerate,1]{label=\arabic*)}%
\setlist[enumerate,2]{label=\arabic{enumi}.\arabic*)}%
\setlist[enumerate,3]{label=\arabic{enumi}.\arabic{enumii}.\arabic*)}%

\makeatletter
\newcommand\myisodate{\number\year-\ifcase\month\or 01\or 02\or 03\or 04\or 05\or 06\or 07\or 08\or 09\or 10\or 11\or 12\fi-\ifcase\day\or 01\or 02\or 03\or 04\or 05\or 06\or 07\or 08\or 09\or 10\or 11\or 12\or 13\or 14\or 15\or 16\or 17\or 18\or 19\or 20\or 21\or 22\or 23\or 24\or 25\or 26\or 27\or 28\or 29\or 30\or 31\fi}%
\makeatother
\newcommand*{\abstractnoindent}{}%
\let\abstractnoindent\abstract
\renewcommand*{\abstract}{\let\quotation\quote\let\endquotation\endquote
  \abstractnoindent}
\deffootnote[1em]{1em}{1em}{\textsuperscript{\thefootnotemark}}%
\pdfstringdefDisableCommands{\let\bm\relax}%
\xdefinecolor{darkgray}{RGB}{56, 56, 56}%
\xdefinecolor{semilightgray}{RGB}{240, 240, 240}%
\xdefinecolor{lightgray}{RGB}{247, 247, 247}%
\xdefinecolor{blue}{RGB}{58, 95, 205}%
\xdefinecolor{red}{RGB}{205, 41, 144}%
\xdefinecolor{orange}{RGB}{238, 118, 0}%
\xdefinecolor{chocolate}{RGB}{205, 102, 29}%
\lstset{
  basicstyle=\ttfamily\small,%
  frame=lrtb, framerule=0pt, framexleftmargin=1pt,%
  basewidth=0.5em,%
  tabsize=8,%
  showstringspaces=false,%
  captionpos=b,%
  breaklines=true,%
  fancyvrb=true,%
  extendedchars=false,%
  rangeprefix=\#\#'\ \{\ ,%
  rangesuffix=\ \},%
  includerangemarker=false%
}
\lstdefinestyle{input}{
  backgroundcolor=\color{semilightgray},%
  commentstyle=\itshape\color{black},%
  keywordstyle=\color{black},%
  emphstyle=\color{black},%
  stringstyle=\color{black},%
  numbers=left,%
  numbersep=4.8pt,%
  numberstyle=\color{darkgray!80}\tiny%
}
\lstdefinestyle{output}{
  backgroundcolor=\color{lightgray}%
}

\lstdefinestyle{Rstyle}{
  language=R,%
  keywords={function, if, else, switch, repeat, while, for, in, next, break},%
  otherkeywords={},%
  emph={TRUE, FALSE, NULL, NA, NaN, Inf}%
}
\expandafter\let\csname Sinput\endcsname\relax
\expandafter\let\csname endSinput\endcsname\relax
\expandafter\let\csname Soutput\endcsname\relax
\expandafter\let\csname endSoutput\endcsname\relax
\lstnewenvironment{Sinput}[1][]{%
  \lstset{style=input, style=Rstyle}
  #1%
}{\vspace{-0.25\baselineskip}}%
\lstnewenvironment{Soutput}[1][]{%
  \lstset{style=output, style=Rstyle}
  #1%
}{\vspace{-0.25\baselineskip}}%

\lstdefinestyle{LaTeXstyle}{
  language=[LaTeX]TeX,%
  texcs={},%
  otherkeywords={}%
}
\lstnewenvironment{LaTeXinput}[1][]{%
  \lstset{style=input, style=LaTeXstyle}
  #1%
}{\vspace{-0.25\baselineskip}}%
\lstnewenvironment{LaTeXoutput}[1][]{%
  \lstset{style=output, style=LaTeXstyle}
  #1%
}{\vspace{-0.25\baselineskip}}%

\lstdefinestyle{otherstyle}{
  language={},%
  otherkeywords={},%
  upquote=true%
}
\lstnewenvironment{otherinput}[1][]{%
  \lstset{style=input, style=otherstyle}
  #1%
}{\vspace{-0.25\baselineskip}}%
\lstnewenvironment{otheroutput}[1][]{%
  \lstset{style=output, style=otherstyle}
  #1%
}{\vspace{-0.25\baselineskip}}%
\newcommand*{\code}{\lstinline[
  basicstyle=\ttfamily,
  style=otherstyle,
  literate={~}{{$\sim$}}1
  ]}

\setlength{\bibhang}{1em}%
\DeclareNameAlias{sortname}{last-first}%
\DefineBibliographyExtras{american}{\DeclareQuotePunctuation{}}%
\renewbibmacro*{volume+number+eid}{%
  \setunit*{\addcomma\space}%
  \printfield{volume}%
  \printfield{number}}
\DeclareFieldFormat*{number}{(#1)}
\DeclareFieldFormat*{title}{#1}%
\DeclareFieldFormat{doi}{%
  \ifhyperref
    {\href{http://dx.doi.org/#1}{\nolinkurl{doi:#1}}}%
    {\nolinkurl{doi:#1}}}%
\renewbibmacro*{in:}{}%
\DeclareFieldFormat{isbn}{ISBN #1}%
\DeclareFieldFormat{pages}{#1}%
\DeclareFieldFormat{url}{\url{#1}}%
\DeclareFieldFormat{urldate}{\mkbibparens{#1}}%
\addbibresource{paper.bib}%
\renewcommand*{\cite}[2][]{\textcite[#1]{#2}}%

\newif\ifstarttheorem
\declaretheoremstyle[%
  spaceabove=0.5em,
  spacebelow=0.5em,
  headfont=\sffamily\bfseries\global\starttheoremtrue,
  notefont=\sffamily\bfseries,
  notebraces={(}{)},
  headpunct={},
  bodyfont=\normalfont,
  postheadspace=\newline%
]{myMainStyle}
\declaretheorem[style=myMainStyle, numberwithin=section]{definition}%
\declaretheorem[style=myMainStyle, sibling=definition]{proposition}

\declaretheorem[style=myMainStyle, sibling=definition]{remark}
\declaretheorem[style=myMainStyle, sibling=definition]{example}

\makeatletter
\preto\itemize{%
  \if@inlabel
    \ifstarttheorem
      \mbox{}\par\nobreak\vskip\glueexpr-\parskip-\baselineskip+0.25em\relax\hrule\@height\z@
    \fi%
  \fi%
  \global\starttheoremfalse%
 \def\tempa{proof}%
 \ifx\tempa\mycurrenvir
    \ifstarttheorem
      \mbox{}\par\nobreak\vskip\glueexpr-\parskip-\baselineskip+0.25em\relax\hrule\@height\z@
    \fi%
 \fi%
 \global\starttheoremfalse%
}
\preto\enditemize{\global\starttheoremfalse}
\makeatother

\makeatletter
\preto\enumerate{%
  \if@inlabel
    \ifstarttheorem
      \mbox{}\par\nobreak\vskip\glueexpr-\parskip-\baselineskip+0.25em\relax\hrule\@height\z@
    \fi%
  \fi%
  \global\starttheoremfalse%
 \def\tempa{proof}%
 \ifx\tempa\mycurrenvir
    \ifstarttheorem
      \mbox{}\par\nobreak\vskip\glueexpr-\parskip-\baselineskip+0.25em\relax\hrule\@height\z@
    \fi%
 \fi%
 \global\starttheoremfalse%
}
\preto\endenumerate{\global\starttheoremfalse}
\makeatother

\newcommand{\tmb}[3]{\underset{{\scriptscriptstyle #3}}{\overset{{\scriptscriptstyle #1}}{#2}}}%

\newcommand*{\isim}{\tmb{\text{\tiny{ind.}}}{\sim}{}}

\newcommand*{\IN}{\mathbb{N}}

\newcommand*{\U}{\operatorname{U}}

\newcommand*{\expm}{\operatorname*{expm1}}
\newcommand*{\logp}{\operatorname*{log1p}}

\renewcommand*{\P}{\mathbb{P}}

\newcommand*{\R}{\textsf{R}}

\hyphenation{Ar-chi-me-dean}

\IfFileExists{upquote.sty}{\usepackage{upquote}}{}
\begin{document}
\thispagestyle{plain}
\begin{center}
  \sffamily
  {\bfseries\LARGE Random number generators produce collisions:\\ Why, how many and more\par}
  \bigskip\smallskip
  {\Large
    Marius Hofert\footnote{Department of Statistics and Actuarial Science, University of
    Waterloo, 200 University Avenue West, Waterloo, ON, N2L
    3G1,
    \href{mailto:marius.hofert@uwaterloo.ca}{\nolinkurl{marius.hofert@uwaterloo.ca}}. This
    work was supported by NSERC under Discovery Grant RGPIN-5010-2015.}
    \par
    \bigskip
    \myisodate\par}
\end{center}
\par\smallskip
\begin{abstract}
  It seems surprising that when applying widely used random number generators to
  generate one million random numbers on modern architectures, one obtains, on
  average, about 116 collisions. This article explains why, how to
  mathematically compute such a number, why they often cannot be obtained in a
  straightforward way, how to numerically compute them in a robust way and,
  among other things, what would need to be changed to bring this number below
  1. The probability of at least one collision is also briefly addressed, which,
  as it turns out, again needs a careful numerical treatment. Overall, the
  article provides an introduction to the representation of floating-point
  numbers on a computer and corresponding implications in statistics and
  simulation. All computations are carried out in \R\ and are reproducible with
  the code included in this article.
\end{abstract}
\minisec{Keywords}
Random numbers, %
ties, collisions, floating point numbers, expected number of collisions, probability of collision.
\minisec{MSC2010}
65C60%

\section{Introduction}
When generating one million (pseudo-)random numbers %
(so realizations of independent and identically distributed (iid) random variables $U_1,\dots,U_n$
from the standard uniform distribution $\U(0,1)$, in short $U_1,\dots,U_n\isim\U(0,1)$ for $n=10^6$)
with the default random number generator (RNG) of the statistical software \R, see \cite{R}, one typically obtains
well over a hundred duplicated values. Let us first verify this fact empirically.
\begin{Schunk}
\begin{Sinput}
> n <- 1e6 # number of random numbers to draw
> set.seed(271) # set a seed for reproducibility
> U <- runif(n) # generate n U(0,1) realizations
> I <- duplicated(U) # logical(n) indicating duplicated values
> (C <- sum(I)) # number of duplicated values
\end{Sinput}
\begin{Soutput}
[1] 120
\end{Soutput}
\begin{Sinput}
> stopifnot(C == n - length(unique(U))) # sanity check
\end{Sinput}
\end{Schunk}
For the seed $271$ used here, \R's default RNG produced $120$ duplicated
values. It seems unexpected that this number is so large given that generating
$n=10^6$ random numbers is far away from being a large number in simulation
studies in this day and age, and given that any number of iid $\U(0,1)$ random
variables should, almost surely (so with probability $1$), not produce any
duplicated values. Why do we see so many of them? How many can we expect to see when
generating $n$ iid realizations from $\U(0,1)$?

Before finding answers to these questions, let us clarify some terms. If equal
numbers appear, the statistics literature typically speaks of \emph{tied data}
or \emph{ties} appearing in the data. It seems less clear whether two equal numbers are counted as one tie or two
ties. \cite[Example~1]{kendall1945} adopts the latter approach.
The following code computes the number of ties according to this convention in simple examples and for the realizations in
\code{U}. %
\begin{Schunk}
\begin{Sinput}
> ## Auxiliary function for indicating ties
> ties <- function(x) {
+     tab <- table(x) # count for each unique value of x how often it appears
+     tab[tab == 1] <- 0 # those appearing precisely once are no ties
+     tab
+ }
> ## Example 1
> x <- c(1, 2, 3)
> (dpl <- duplicated(x)) # x's duplicated?
\end{Sinput}
\begin{Soutput}
[1] FALSE FALSE FALSE
\end{Soutput}
\begin{Sinput}
> sum(dpl) # no duplicated values; note that R treats FALSE as 0 and TRUE as 1
\end{Sinput}
\begin{Soutput}
[1] 0
\end{Soutput}
\begin{Sinput}
> (tie <- ties(x)) # tied?
\end{Sinput}
\begin{Soutput}
x
1 2 3
0 0 0
\end{Soutput}
\begin{Sinput}
> sum(tie) # no ties
\end{Sinput}
\begin{Soutput}
[1] 0
\end{Soutput}
\begin{Sinput}
> ## Example 2
> x <- c(1, 1, 2)
> (dpl <- duplicated(x)) # x's duplicated?
\end{Sinput}
\begin{Soutput}
[1] FALSE  TRUE FALSE
\end{Soutput}
\begin{Sinput}
> sum(dpl) # one duplicated value (the second 1)
\end{Sinput}
\begin{Soutput}
[1] 1
\end{Soutput}
\begin{Sinput}
> (tie <- ties(x)) # tied?
\end{Sinput}
\begin{Soutput}
x
1 2
2 0
\end{Soutput}
\begin{Sinput}
> sum(tie) # two ties
\end{Sinput}
\begin{Soutput}
[1] 2
\end{Soutput}
\begin{Sinput}
> ## Example 3
> x <- c(1, 1, 2, 3, 3, 3)
> (dpl <- duplicated(x)) # x's duplicated?
\end{Sinput}
\begin{Soutput}
[1] FALSE  TRUE FALSE FALSE  TRUE  TRUE
\end{Soutput}
\begin{Sinput}
> sum(dpl) # three duplicated values (the second 1 and the second and third 3)
\end{Sinput}
\begin{Soutput}
[1] 3
\end{Soutput}
\begin{Sinput}
> (tie <- ties(x)) # tied?
\end{Sinput}
\begin{Soutput}
x
1 2 3
2 0 3
\end{Soutput}
\begin{Sinput}
> sum(tie) # five ties (both 1s and all three 3s)
\end{Sinput}
\begin{Soutput}
[1] 5
\end{Soutput}
\begin{Sinput}
> ## For our example above
> sum(duplicated(U)) # 120 duplicated values
\end{Sinput}
\begin{Soutput}
[1] 120
\end{Soutput}
\begin{Sinput}
> sum(ties(U)) # 240 ties
\end{Sinput}
\begin{Soutput}
[1] 240
\end{Soutput}
\end{Schunk}
We see that \code{duplicated()} indeed indicates colliding numbers or \emph{collisions} and
\code{sum(duplicated(U))} thus counts the number of collisions.
$C$ collisions correspond to at least $C+1$ ties (if all ties are equal) and at most $2C$ ties
(if there are $C$ different pairs of ties). The latter is the case for the $n=10^6$ random
numbers generated above, so for the chosen seed we obtained 120 collisions and 240 ties;
at least not more than two of any of the $n$ values generated are equal.
In what follows we focus on the number of collisions. %

\section{Basic analysis}
As a first step, we check how the collisions are distributed among the $n$
generated random numbers, in other words, which random numbers are tied.
We plot the index of each collision in the vector
$\bm{U}=(U_1,\dots,U_n)$ (so the position of each duplicated number within the $n$
generated random numbers) against the index of each collision (so $1$ to
$120$). By building cumulative sums with \code{cumsum()}, we can also easily plot the
number of collisions as a function of the sample size up to and including $n$.
Figure~\ref{fig:coll:distribution:num} shows these two plots.
\begin{Schunk}
\begin{Sinput}
> tab <- table(U) # count for each unique number how often it appears
> plot(which(tab > 1), xlab = "Index of each collision (duplicated number)",
+      ylab = "Index of each collision (random number appearing more than once)")
> plot(cumsum(I), type = "l", xlab = "Sample size",
+      ylab = "Number of collisions")
\end{Sinput}
\end{Schunk}
\setkeys{Gin}{width=\textwidth}
\begin{figure}[htbp]
\centering
  \begin{minipage}{0.48\textwidth}
\begin{Schunk}

\includegraphics[width=\maxwidth]{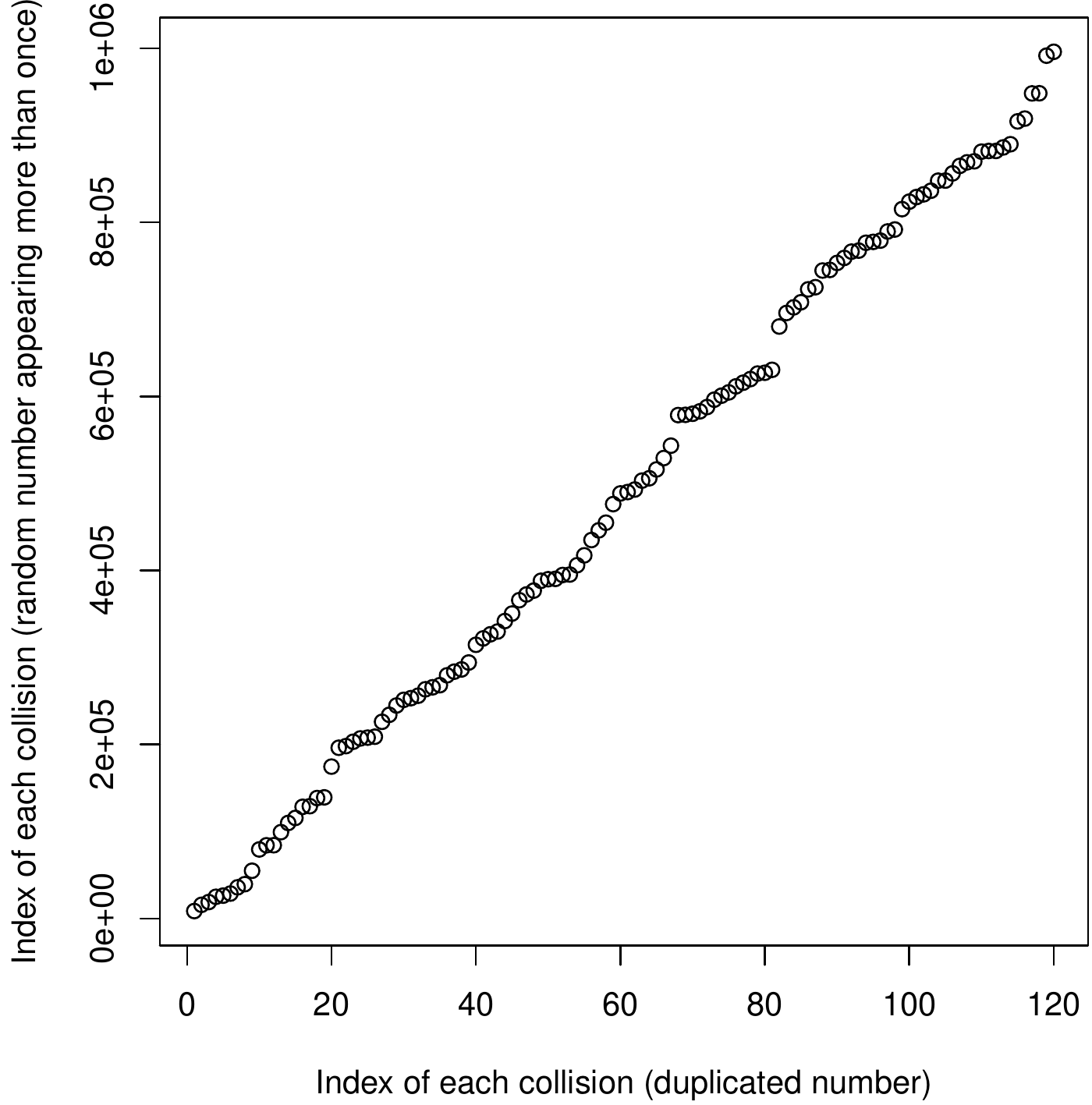} \end{Schunk}
  \end{minipage}%
  \hfill
  \begin{minipage}{0.48\textwidth}
\begin{Schunk}

\includegraphics[width=\maxwidth]{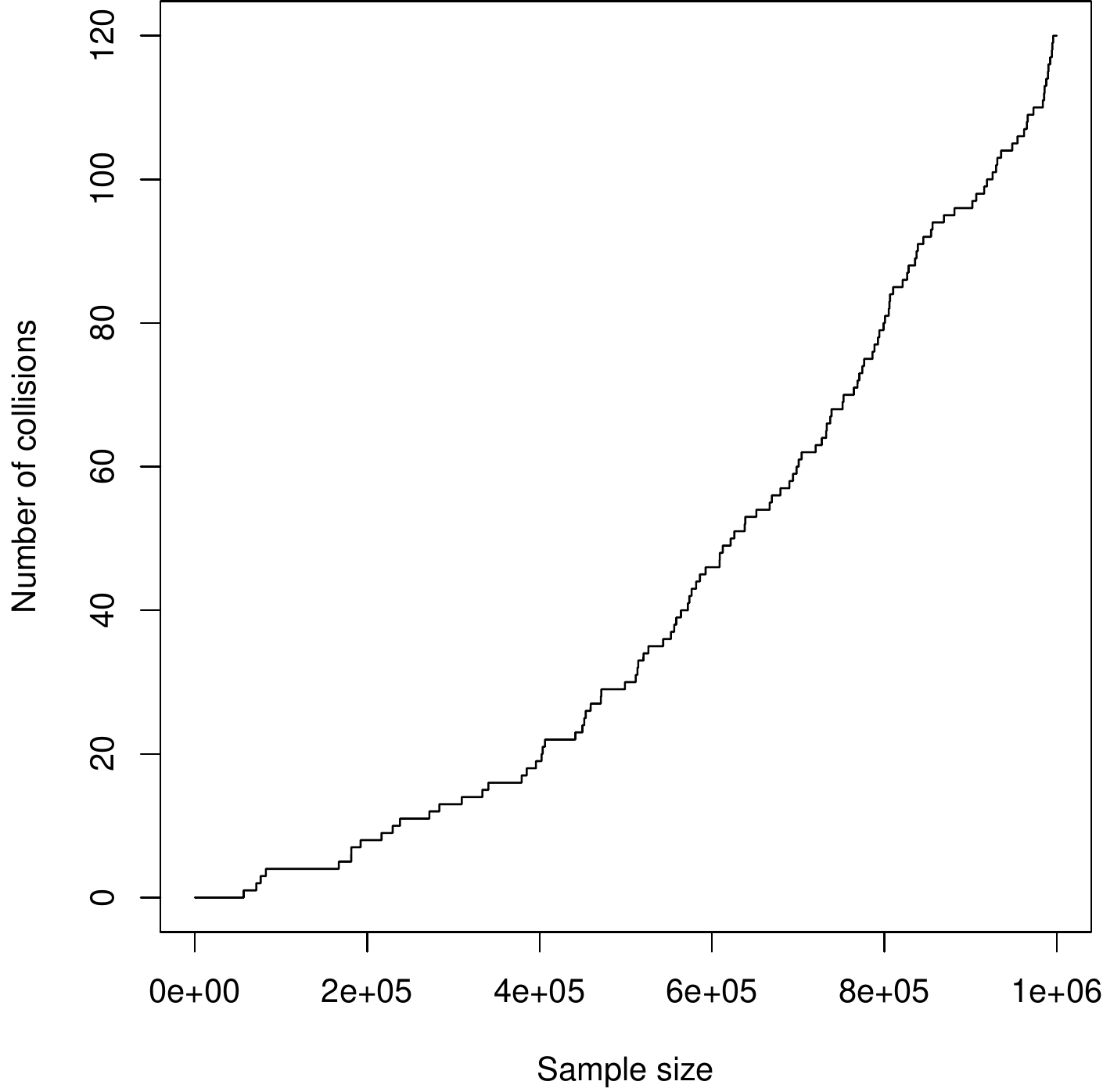} \end{Schunk}
  \end{minipage}
  \caption{Distribution of collisions (left) and number of collisions as a function of sample size (right).}
\label{fig:coll:distribution:num}
\end{figure}

As we can see from Figure~\ref{fig:coll:distribution:num}, the collisions do not show
any specific pattern and the number of collisions grows continuously as a
function of the sample size $n$.  In short, there does not seem to be any
strange behavior suddenly appearing. Maybe that is just the behavior
of \R's default RNG?

By default, \R\ uses the Mersenne Twister of \cite{matsumotonishimura1998} (with
initialization fixed by Professor Brian D.\ Ripley) for generating $\U(0,1)$
random numbers as can be seen from the first entry of the output of the \R\
function \code{RNGkind()}. Another popular RNG is L'Ecuyer's combined multiple
recursive RNG (CMRG); see \cite{lecuyer1999a} and
\cite{lecuyersimardchenkelton2002}.  We can use \code{RNGkind("L'Ecuyer-CMRG")}
to ask \R\ to change to this RNG, generate $n$ random numbers and count
the number of collisions.  We obtain $126$ for the seed $271$. So the
problem is not specific to the Mersenne Twister (and we thus switch back to it).
\begin{Schunk}
\begin{Sinput}
> RNGkind() # first component => Mersenne Twister is used as RNG for U(0,1)
\end{Sinput}
\begin{Soutput}
[1] "Mersenne-Twister" "Inversion"        "Rejection"
\end{Soutput}
\begin{Sinput}
> RNGkind("L'Ecuyer-CMRG") # switch random number generator (RNG)
> set.seed(271) # set a seed for reproducibility
> U. <- runif(n) # generate n U(0,1) realizations with L'Ecuyer's CMRG
> sum(duplicated(U.)) # count the number of collisions
\end{Sinput}
\begin{Soutput}
[1] 126
\end{Soutput}
\begin{Sinput}
> RNGkind("Mersenne-Twister") # switch back to R's default (Mersenne Twister)
\end{Sinput}
\end{Schunk}

\section{Floating-point numbers}\label{sec:floats}
The random numbers produced in \R\ are floating-point numbers. We now explain
what this means and the corresponding implications. It will be a key to
understand why RNGs produce collisions and how to (properly) compute the
expected number of collisions or the probability of a collision.

According to the IEEE~754 standard (\emph{IEEE} standing for the \emph{Institute of
Electrical and Electronics Engineers}), a 64-bit base-2 double-precision floating
point normal number (short: \emph{double}) is given by
\begin{align}
  x=(-1)^{s}(1.f_{51}\dots f_{0})_2\cdot 2^{e-1023}=(-1)^{s}\biggl(1+\sum_{i=1}^{52}f_{52-i}2^{-i}\biggr)\cdot 2^{e-1023}\label{double:repr}
\end{align}
which involves the following quantities:
\begin{itemize}
\item The \emph{sign} $s$. This single bit indicates the sign of $x$, so
  $s\in\{0,1\}$ with $s=1$ for negative $x$ and $s=0$ for positive $x$.
\item The \emph{significand} $f$. This group of $52$ bits represents
  the digits after the decimal point in ``$1.$'', so $f_i\in\{0,1\}$, $i=0,\dots,51$ (traditionally,
  the index $i$ starts from $0$). All numbers with significand of the form ``$1.$''
  are \emph{normal} numbers and for such numbers, all $52$ bits can be used
  to represent the digits after the decimal point. Note that the subscript $2$ in representation~\eqref{double:repr} indicates
  that the resulting number is to be viewed as represented in base 2.

  The significand is also known as \emph{mantissa} but the use of the latter
  term is discouraged by the IEEE or \cite{knuth1969} since ``mantissa'' has
  historically been used to refer to the factional part of a
  logarithm. Furthermore, note that ``normal number'' has a different meaning in
  computing (see above) than in mathematics (where a number is a \emph{normal
    number} if, for every $n\in\IN$, all blocks of $n$ digits are equally likely
  to appear). The meaning of a normal number in
  computing matches that of a \emph{normalized number} in mathematics.
\item The \emph{exponent} $e$. This group of $11$ bits represents the
  $2^{11}=2048$ numbers from $0$ to $2047$. The smallest exponent
  $e=(0\dots0)_2=0$ is used, for example to represent \emph{subnormal} numbers, so
  numbers whose significand does not begin with ``$1.$''. The largest exponent
  $e=(1\dots1)_2=1\cdot 2^0+\dots+1\cdot
  2^{10}=\sum_{i=0}^{10}2^i=\frac{1-2^{11}}{1-2}=2047$ is also reserved, for
  example for representing \code{Inf} (infinity) or \code{NaN} (not a
  number). Important for us is that the exponents from $e=(0\dots01)_2$ to
  $e=(1\dots10)_2$ (representing the numbers from $1$ to $2046$) can be used to
  represent finite, normal numbers. The shift of $e$ by $1023$ (which equals $2046/2$)
  in~\eqref{double:repr} allows for the representation of small or large
  exponents and therefore small or large floating point numbers $x$.
\end{itemize}
Figure~\ref{fig:IEEE:754} highlights these three components of a double.
\begin{figure}[htbp]
  \includegraphics[width=\textwidth]{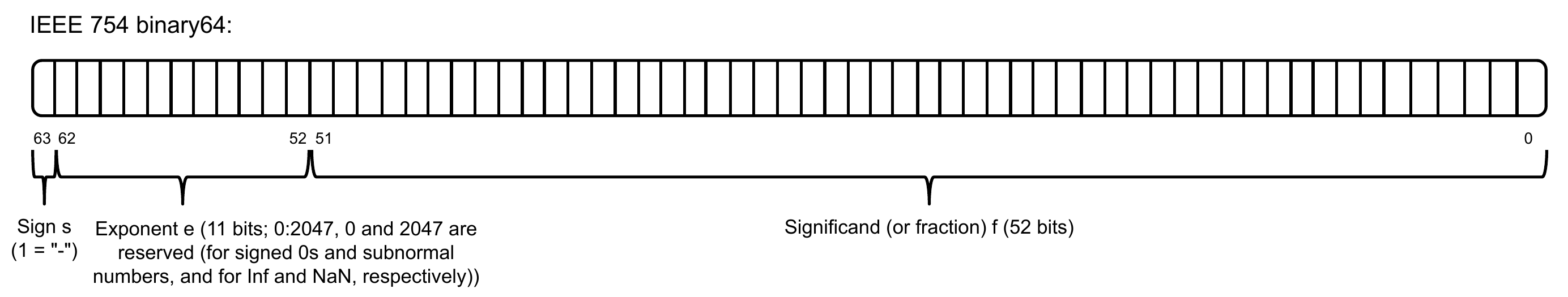}%
  \caption{Groups of bits (sign, exponent, significand) for representing floating point
    normal numbers~\eqref{double:repr} according to the IEEE~754 standard.}
  \label{fig:IEEE:754}
\end{figure}

To better understand these technicalities, we now consider some examples.
\begin{example}[Smallest positive normal number]\label{ex:smallest:positive}
  To obtain the smallest positive normal number $x$ of form~\eqref{double:repr}, we have to
  choose $s=0$ (since $x>0$), the smallest non-reserved exponent
  $e=(0\dots 01)_2=1$ and the smallest significand $f=(0\dots0)_2$. This leads
  to
  $x=(-1)^0\cdot (1.0)_2\cdot 2^{1-1023}= 1.0\cdot 2^{1-1023} = 2^{-1022}\approx 2.225074\cdot 10^{-308}$.
  \R\ stores this value as component \code{double.xmin} of the list of
  machine constants, \code{.Machine}.
\begin{Schunk}
\begin{Sinput}
> .Machine$double.xmin
\end{Sinput}
\begin{Soutput}
[1] 2.225074e-308
\end{Soutput}
\end{Schunk}
Since \code{.Machine$double.xmin} is the smallest positive normal number, any smaller number
should be truncated to zero, right? Not quite. As the
exponent $e=(0\dots0)_2$ is used to represent subnormal numbers, \R\ utilizes those
to obtain even smaller (subnormal) numbers and we can say exactly when truncation to zero happens:
\begin{Schunk}
\begin{Sinput}
> .Machine$double.xmin / 2 # still not 0 (but subnormal)
\end{Sinput}
\begin{Soutput}
[1] 1.112537e-308
\end{Soutput}
\begin{Sinput}
> .Machine$double.xmin/2^52 == 0 # significand (0..01)_2 after "0.", still not 0
\end{Sinput}
\begin{Soutput}
[1] FALSE
\end{Soutput}
\begin{Sinput}
> .Machine$double.xmin/2^53 == 0 # significand (0...0)_2 after "0.", now really 0
\end{Sinput}
\begin{Soutput}
[1] TRUE
\end{Soutput}
\end{Schunk}
\end{example}

\begin{example}[Smallest normal number greater than one minus 1]\label{ex:non:equidistant}
  To obtain the smallest normal number $x$ greater than
  one minus $1$ of form~\eqref{double:repr}, we have to choose $s=0$, the exponent $e=(01\dots
  1)_2=1023$ (to obtain the zero power of
  $2$ to be able to represent numbers
  $1.\dots$) and the smallest significand greater than $0$,
  $f=(0\dots01)_2$. This leads to
  $x=(-1)^0\cdot (1.0\dots01)_2 \cdot 2^{1023-1023} -1 = 1+2^{-52} -1 = 2^{-52} \approx 2.220446\cdot 10^{-16}$.
  \R\ stores this value as component \code{double.eps} of \code{.Machine}.
\begin{Schunk}
\begin{Sinput}
> .Machine$double.eps
\end{Sinput}
\begin{Soutput}
[1] 2.220446e-16
\end{Soutput}
\end{Schunk}
\end{example}

\begin{example}[Largest normal number]
  To obtain the largest normal number $x$ of form~\eqref{double:repr}, we have to choose $s=0$, the largest non-reserved exponent $e=(1\dots
  10)_2=\sum_{i=1}^{10}2^i=2\sum_{i=0}^9 2^i=2\frac{1-2^{10}}{1-2}=2^{11}-2=2046$ and the largest significand
  $f=(1\dots1)_2$. This leads to
  $x=(-1)^0\cdot (1.1\dots1)_2 \cdot 2^{2046-1023} = (1+\sum_{i=1}^{52}2^{-i})2^{1023} = (\sum_{i=0}^{52} 2^{-i})2^{1023} = \bigl(\frac{1-(1/2)^{53}}{1-1/2}\bigr)2^{1023} =(1-2^{-53})2^{1024}$. Note that this cannot be evaluated as such since \code{2^1024} is \code{Inf} in double precision. But if we rewrite it as $(1-2^{-53})2^{1024}=(2^{53}-1)2^{971}$ we obtain $\approx 1.797693\cdot 10^{308}$.
  \R\ stores this value as component \code{double.xmax} of \code{.Machine}.
\begin{Schunk}
\begin{Sinput}
> .Machine$double.xmax
\end{Sinput}
\begin{Soutput}
[1] 1.797693e+308
\end{Soutput}
\end{Schunk}
\end{example}

\begin{remark}\label{rem:rng}
  Since \R\ version 3.6.0, the default algorithm for generating random integers
  from the discrete uniform distribution
  $\U(\{1,\dots,n\})$ on the numbers 1 to
  $n$ is rejection; see the third entry \code{"Rejection"} that \code{RNGkind()} printed above and its
  interpretation on \code{?RNGkind}.  The idea is that at least
  $k=\lceil\log_2(n)\rceil$-many bits are required to represent the numbers 0 to
  $n-1$ (but potentially more) and so one can randomly generate
  $k$-bit patterns until a number less than or equal to
  $n-1$ results; shifting the output by 1 then leads to realizations from 1 to
  $n$ as required for a $\U(\{1,\dots,n\})$ generator.

  As such, one might be tempted to believe that one could generate $\U(0,1)$
  numbers by generating random 52-bit patterns for the significand.  However,
  the resulting doubles are not equally spaced, they are denser near 0 than near 1
  which we have seen in Examples~\ref{ex:smallest:positive} and~\ref{ex:non:equidistant}:
  The distance between the smallest positive normal number and $0$ is the smallest positive normal number, so
of the order $\mathcal{O}(10^{-308})$ whereas the distance between the smallest normal number greater than $1$ and $1$
is of the order $\mathcal{O}(10^{-16})$, so much larger. In particular, the grid of representable doubles is not equidistant. This is why numerical problems often appear near $1$ and
why \R\ provides numerically stable functions such as \code{expm1()} (for $\expm(x)=\exp(x)-1$) and \code{log1p()} (for $\logp(x)=\log(1+x)$)
to evaluate $\exp(x)-1$ and $\log(1+x)$ for arguments $x$ near $0$; see the appendix for more details.
\end{remark}

\section{Theoretical results}
Many RNGs produce (pseudo-)random integers in the form of a
recursion and such integers are then location-scale transformed to $(0,1)$ and
represented as doubles to produce (pseudo-)random numbers. If one consults the
help page of \code{RNGkind()} via \code{?RNGkind}, one can learn that the
recursions of both the Mersenne Twister and L'Ecuyer's CMRG produce 32-bit
integers. The following result provides the expected number of collisions in a
$k$-bit integer architecture (in which we can represent the $2^k$ integers from
$0$ to $2^k - 1$) with a RNG that produces $k$-bit integers which are then mapped
to doubles in $(0,1)$ and returned as (pseudo-)random numbers.
\begin{proposition}[Expected number of collisions]\label{prop:exp:num:coll}
  The expected number of collisions among $n$ random $k$-bit integers is
  $n - 2^k (1 - (1 - 2^{-k})^n)$.
\end{proposition}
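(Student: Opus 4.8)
The plan is to reduce the expected number of collisions to an occupancy computation and then apply linearity of expectation with well-chosen indicator variables. The key observation is that, under the counting convention fixed in the introduction, the number of collisions $C$ equals $n$ minus the number of \emph{distinct} values appearing among the $n$ generated integers: a value occurring $m\ge 1$ times contributes $m-1$ duplicated entries, and since the multiplicities sum to $n$, summing $m-1$ over the distinct values gives $n$ minus the number of distinct values. Writing $D$ for the number of distinct values, we therefore have $C = n - D$, hence $\E[C] = n - \E[D]$, so it suffices to compute $\E[D]$.

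To compute $\E[D]$, I would model the $n$ random $k$-bit integers as iid draws uniformly distributed on the $2^k$ values $\{0,1,\dots,2^k-1\}$. The decisive idea is not to attack the distribution of $D$ directly (which would require the full occupancy distribution), but to decompose $D$ as a sum of indicators over the possible values: for each value $j$, let $\I_j$ indicate the event that $j$ appears at least once among the $n$ draws, so that $D = \sum_{j=0}^{2^k-1}\I_j$. Linearity of expectation then gives $\E[D] = \sum_{j} \P(j\text{ appears at least once})$, converting a difficult combinatorial count into $2^k$ identical and elementary probabilities.

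Each of these probabilities follows from independence by a complementary-event argument: a single draw misses a fixed value $j$ with probability $1 - 2^{-k}$, so by independence all $n$ draws miss $j$ with probability $(1-2^{-k})^n$, whence $\P(j\text{ appears at least once}) = 1 - (1-2^{-k})^n$. Since this is the same for every $j$ by symmetry, summing over the $2^k$ values yields $\E[D] = 2^k\bigl(1 - (1-2^{-k})^n\bigr)$, and substituting into $\E[C] = n - \E[D]$ produces the claimed formula.

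I expect no serious obstacle here: the result is a clean consequence of linearity of expectation, and the only point requiring care is the bookkeeping that correctly identifies $C$ with $n - D$ under the chosen convention for counting collisions. The genuine insight, and the reason the expectation is tractable while the full distribution is not, is the switch from counting collisions directly to counting occupied values via indicator variables.
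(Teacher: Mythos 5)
Your proposal is correct and follows essentially the same route as the paper's proof: both compute the expected number of distinct attained values as $2^k\bigl(1-(1-2^{-k})^n\bigr)$ via the per-value probability of being hit at least once, and then subtract from $n$. Your version is slightly more explicit about the indicator decomposition and the bookkeeping identity $C=n-D$, but the argument is the same.
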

\begin{proof}
  Let $i$ be a fixed $k$-bit integer and $I_1,\dots,I_n$ be $n$ random $k$-bit
  integers. The probability $p$ of $i$ not being attained by a single $I_l$ is
  $p=\P(I_l\neq i)=(2^k-1) / 2^k = 1-2^{-k}$ ($I_l$ can take on $2^k-1$ out of $2^k$
  values). Therefore, the probability of none of $I_1,\dots,I_n$ to attain $i$ is
  $\P(I_1\neq i,\dots,I_n\neq i)=p^n$ and the probability of $i$ being attained
  by (at least) one $I_l$ is
  $\P(I_l=i\ \text{for at least one}\ l=1,\dots,n)=1-p^n$. The expected number
  of distinct $i$'s among the $2^k$-many to be attained by at least one $I_l$ is thus
  $2^k (1 - p^n)$. Therefore, the expected number of collisions among $n$ random
  $k$-bit integers is $n$ minus the expected number of distinct, attained $i$'s,
  so $n - 2^k (1 - p^n) = n - 2^k (1 - (1 - 2^{-k})^n)$.
\end{proof}

\begin{remark}\label{rem:theory}
  \begin{enumerate}
  \item The expected number of collisions given in Proposition~\ref{prop:exp:num:coll}
    is only correct under the stated assumption that indeed random $k$-bit
    integers are generated. On a $k$-bit architecture, this is only correct if all
    $k$ bits are used. In what follows we assume to work with a $k$-bit
    architecture and an RNG that produces $k$-bit integers which are then mapped
    to doubles in $(0,1)$ and returned as (pseudo-)random numbers. We simply refer
    to such as setup as a \emph{$k$-bit setup}. Note that not all RNGs produce a
    block of $k$ random bits on a $k$-bit architecture;
    see \cite{lecuyer1999a}, \cite{lecuyer1999b}, \cite{lecuyer2017},
    \cite{lecuyermungeroreshkinsimard2017}, \cite{lecuyersimard2007} and the
    references therein.
  \item\label{rem:theory:2}
    \cite{lecuyersimardwegenkittl2002} study the exact distribution of the
    number $C$ of collisions and derive %
    that $\P(C = c)=\frac{b(b-1)\cdot\dots\cdot(b-n+1+c)}{b^n}S(n,n-c)$, %
    where $b$ is the number of buckets (here: $b=2^k$ for $k$-bit integers) the $n$ random numbers can fall into
    and where $S(n,l)$ is the Stirling number of the second kind (providing
    the number of ways $n$ objects can be partitioned into $l$ non-empty sets).
    If $n>b$, the \emph{pigeonhole principle} implies that there must be a collision
    and if $n\le b$, the probability of at least one collision is $\P(C\ge 1)=1-\P(C=0)=1-\frac{(b)_n}{b^n}\cdot 1=1-\prod_{i=1}^{n-1}(1-\frac{i}{b})$
    which is well-known from the \emph{birthday problem}; see, for example, \cite{diaconismosteller1989}. %
    This probability can be computed in \R\ with \code{pbirthday(n, classes = b)}, %
    where \code{n} and \code{b} denote $n$ and $b$, respectively. We will come back to \code{pbirthday()} later.
  \end{enumerate}
\end{remark}

Let us now compute the expected number of collisions for $n=10^6$ and $k=32$
according to Proposition~\ref{prop:exp:num:coll}.
\begin{Schunk}
\begin{Sinput}
> k <- 32 # number of bits of the architecture
> n - 2^k * (1 - (1 - 2^{-k})^n) # expected # of collisions for n random numbers
\end{Sinput}
\begin{Soutput}
[1] 116.4062
\end{Soutput}
\end{Schunk}
So for our original problem of $10^6$ random numbers generated under a 32-bit setup, we indeed expect about 116
collisions.

\section{Numerical evaluation for different setups and sample sizes}
Instead of computing single numbers, it is typically a good idea to consider the output as a
function of the input and check the output for multiple inputs if possible (or
at least select inputs, if necessary).  Our problem is simple enough that we can
study it as a function of $k$ (the number of bits of our setup). Or is it
not? Clearly, we would expect to see fewer collisions, the larger $k$. Let us check this.
\begin{Schunk}
\begin{Sinput}
> k <- 32:64
> plot(k, n - 2^k * (1 - (1 - 2^{-k})^n), type = "l", xlab = "Number k of bits",
+      ylab = paste("Expected number of collisions when generating", n,
+                   "random numbers"))
\end{Sinput}
\end{Schunk}
As we see from the left-hand side of Figure~\ref{fig:exp:num:coll}, there seems
to be a problem. The expected number of collisions appears to be an increasing function of $k$
with a sudden jump at $k=54$. Something is wrong.

Let us consider this problem more closely. We know that $n - 2^k (1 - (1 - 2^{-k})^n)$
is the mathematically correct answer, but is it also the numerically correct one? The inner power $2^{-k}$ looks fine:
\begin{Schunk}
\begin{Sinput}
> 2^(-k)
\end{Sinput}
\begin{Soutput}
 [1] 2.328306e-10 1.164153e-10 5.820766e-11 2.910383e-11 1.455192e-11
 [6] 7.275958e-12 3.637979e-12 1.818989e-12 9.094947e-13 4.547474e-13
[11] 2.273737e-13 1.136868e-13 5.684342e-14 2.842171e-14 1.421085e-14
[16] 7.105427e-15 3.552714e-15 1.776357e-15 8.881784e-16 4.440892e-16
[21] 2.220446e-16 1.110223e-16 5.551115e-17 2.775558e-17 1.387779e-17
[26] 6.938894e-18 3.469447e-18 1.734723e-18 8.673617e-19 4.336809e-19
[31] 2.168404e-19 1.084202e-19 5.421011e-20
\end{Soutput}
\end{Schunk}
The innermost difference shows:
\begin{Schunk}
\begin{Sinput}
> 1-2^(-k) # really 1 or just printed as 1?
\end{Sinput}
\begin{Soutput}
 [1] 1 1 1 1 1 1 1 1 1 1 1 1 1 1 1 1 1 1 1 1 1 1 1 1 1 1 1 1 1 1 1 1 1
\end{Soutput}
\begin{Sinput}
> 1-2^(-k) == 1 # 1-2^{-k} becomes indistinguishable from 1 although 2^{-k} != 0
\end{Sinput}
\begin{Soutput}
 [1] FALSE FALSE FALSE FALSE FALSE FALSE FALSE FALSE FALSE FALSE FALSE FALSE
[13] FALSE FALSE FALSE FALSE FALSE FALSE FALSE FALSE FALSE FALSE  TRUE  TRUE
[25]  TRUE  TRUE  TRUE  TRUE  TRUE  TRUE  TRUE  TRUE  TRUE
\end{Soutput}
\begin{Sinput}
> k[min(which(1-2^(-k) == 1))] # smallest k for which the problem appears
\end{Sinput}
\begin{Soutput}
[1] 54
\end{Soutput}
\end{Schunk}
We see that the part $1 - 2^{-k}$ in the formula for the expected number of collisions
becomes numerically indistinguishable from $1$ for $k\ge 54$ even though $2^{-k}\neq 0$ for all $k$ considered.
This problem appears because of what we mentioned in Remark~\ref{rem:rng}, floating point normal numbers
are denser near $0$ than near $1$. To detect this problem it was helpful to consider the plot
on the left-hand side of Figure~\ref{fig:exp:num:coll} instead of just considering a single $k$.

To correctly compute the expected number of collisions, we rewrite the part
$1-(1-2^{-k})^n$ to avoid the direct evaluation of $1-2^{-k}$. We have
\begin{align*}
  1-(1-2^{-k})^n &= 1 - \exp(n\log(1 - 2^{-k})) = 1 - \exp(n\logp(-2^{-k}))\\
  &= -\expm(n\log1p(-2^{-k})),
\end{align*}
with $\expm(x)=\exp(x)-1$ and $\logp(x)=\log(1+x)$ as already introduced before.
We can now write a function utilizing this formula.
\begin{Schunk}
\begin{Sinput}
> exp_num_coll <- function(n, k) n + 2^k * expm1(n * log1p(-2^(-k)))
\end{Sinput}
\end{Schunk}
If we then plot the expected number of collisions among $n$ random numbers in a $k$-bit setup
as a function of $k$, we indeed
obtain a decreasing graph; see the right-hand side of Figure~\ref{fig:exp:num:coll}.
\begin{Schunk}
\begin{Sinput}
> y <- exp_num_coll(n, k = k)
> plot(k, y, type = "l", xlab = "Number k of bits",
+      ylab = paste("Expected number of collisions when generating", n,
+                   "random numbers"))
\end{Sinput}
\end{Schunk}
\setkeys{Gin}{width=\textwidth}
\begin{figure}[htbp]
\centering
  \begin{minipage}{0.48\textwidth}
\begin{Schunk}

\includegraphics[width=\maxwidth]{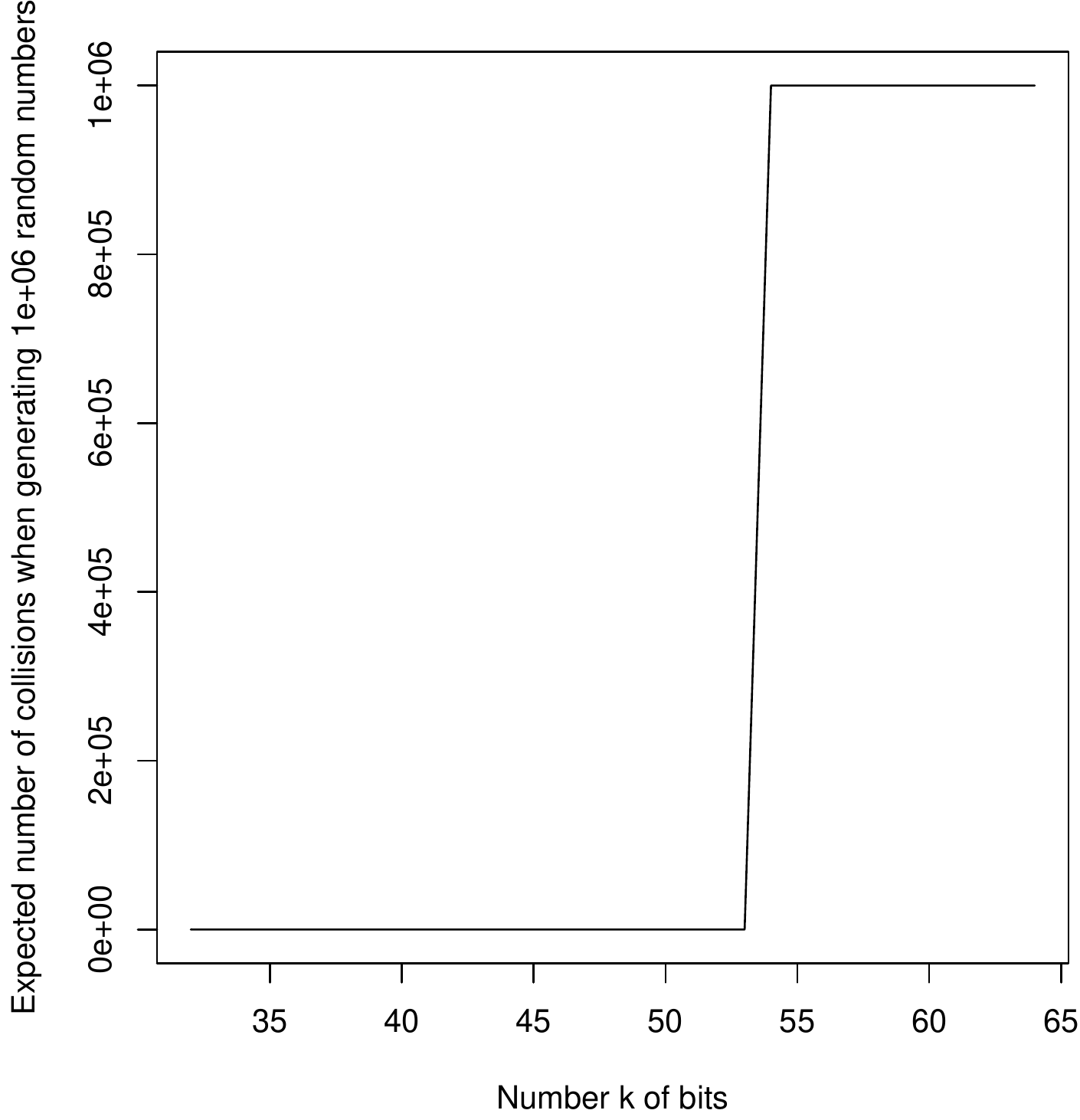} \end{Schunk}
  \end{minipage}%
  \hfill
  \begin{minipage}{0.48\textwidth}
\begin{Schunk}

\includegraphics[width=\maxwidth]{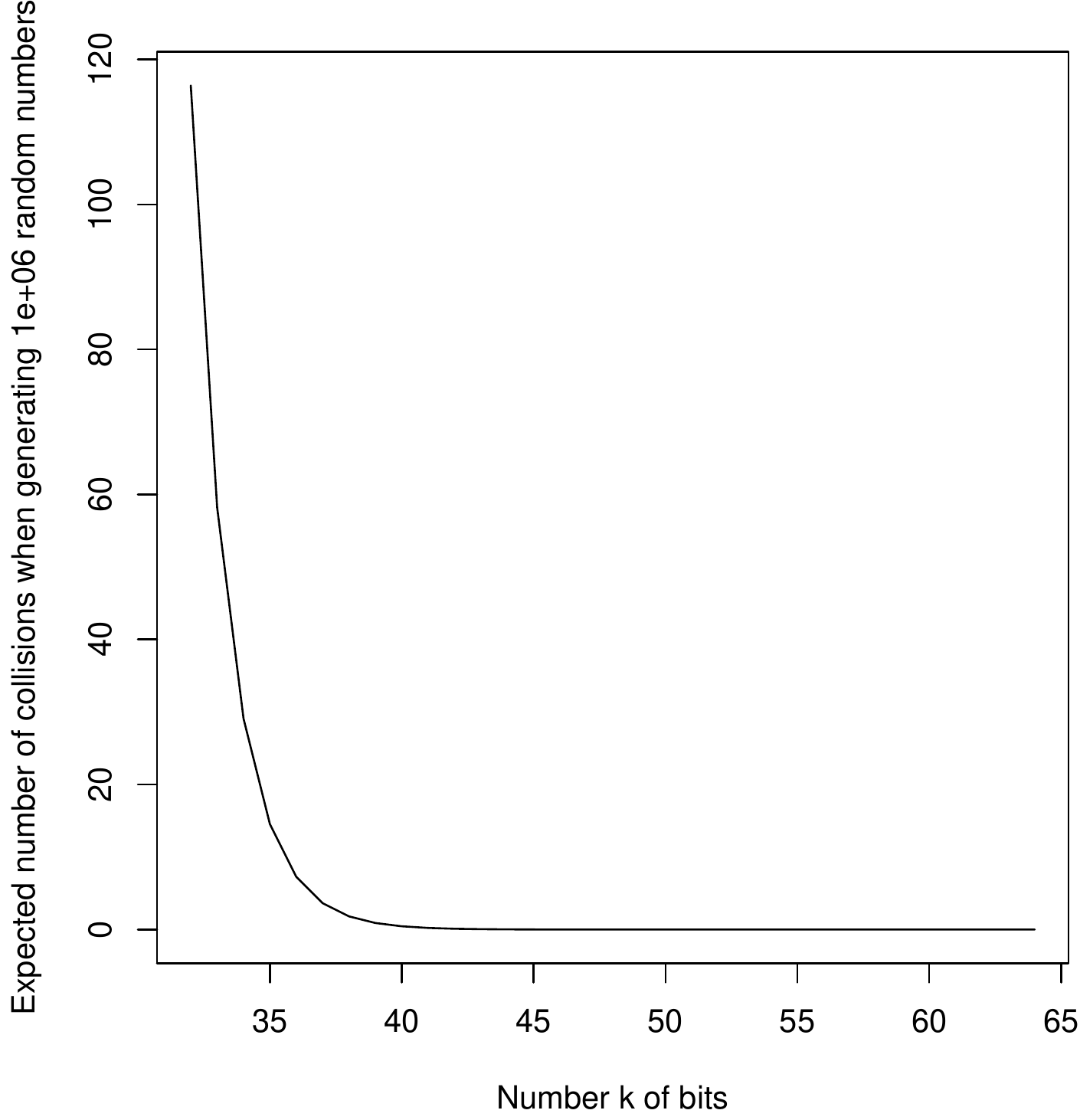} \end{Schunk}
  \end{minipage}
  \caption{Expected number of collisions among $n$ random numbers in a $k$-bit setup
    computed as $n - 2^k (1 - (1 - 2^{-k})^n)$ (left) and as $n + 2^k \expm(n\logp(-2^{-k}))$ (right).}
\label{fig:exp:num:coll}
\end{figure}

Now that we have a numerically robust formula, we can answer some more questions. What
is the smallest $k$ such that generating one million random numbers in a $k$-bit setup
leads to, in expectation, at most one collision?
\begin{Schunk}
\begin{Sinput}
> k[which(y <= 1)[1]]
\end{Sinput}
\begin{Soutput}
[1] 39
\end{Soutput}
\end{Schunk}
What is the expected number of collisions among one million random numbers in a 52-bit
setup?
\begin{Schunk}
\begin{Sinput}
> exp_num_coll(n, k = 52)
\end{Sinput}
\begin{Soutput}
[1] 0.0001110223
\end{Soutput}
\end{Schunk}

A realistic scenario is the shift from a 32-bit to a 64-bit setup at some point.
In what follows, we consider the presented formulas for $k$-bit setups up to
$k=64$, but note that if the generated random numbers are stored as doubles, the
expected number and the probability of collisions among the resulting random
numbers are most likely higher than those of $k$-bit integers. This is because
the significand only has 52 bits. It seems hard to determine the probability of
collision or the expected number of collisions exactly because of subnormal
numbers and the fact that doubles are not equally spaced in the unit interval;
see Remark~\ref{rem:rng}.

The expected number of collisions %
and the number $n$ %
under a 64-bit setup so that we can expect to
see one collision can be obtained as follows.
\begin{Schunk}
\begin{Sinput}
> exp_num_coll(n, k = 64)
\end{Sinput}
\begin{Soutput}
[1] 2.712477e-08
\end{Soutput}
\begin{Sinput}
> uniroot(function(n)
+     exp_num_coll(n, k = 64) - 1, lower = 1e6, upper = 1e10)[["root"]]
\end{Sinput}
\begin{Soutput}
[1] 6.074e+09
\end{Soutput}
\end{Schunk}

Coming back to the birthday problem, what is the probability of seeing at least one collision %
under a 64-bit setup?
\begin{Schunk}
\begin{Sinput}
> pbirthday(n, classes = 2^64)
\end{Sinput}
\begin{Soutput}
Warning in (c:(c - n + 1))/rep(c, n): longer object length is not a multiple of shorter object length
\end{Soutput}
\begin{Soutput}
[1] 2.707387e-08
\end{Soutput}
\end{Schunk}
Looking at the rather short code of \code{pbirthday()} (\R\ version 4.0.0), we see that
the above call simply returns \code{1 - prod((c:(c - n + 1))/rep(c, n))}, where \code{c}
equals \code{classes}, so $2^{64}$. This is a straightforward evaluation of the probability
of at least one collision according to the birthday problem; see Remark~\ref{rem:theory}~\ref{rem:theory:2}.
The same warning appears already for much smaller $n$ and smaller $k$. %
\begin{Schunk}
\begin{Sinput}
> pbirthday(3, classes = 2^53) # fine
\end{Sinput}
\begin{Soutput}
[1] 3.330669e-16
\end{Soutput}
\begin{Sinput}
> pbirthday(3, classes = 2^54) # warning
\end{Sinput}
\begin{Soutput}
Warning in (c:(c - n + 1))/rep(c, n): longer object length is not a multiple of shorter object length
\end{Soutput}
\begin{Soutput}
[1] 5.551115e-16
\end{Soutput}
\begin{Sinput}
> length((2^54):((2^54)-3+1)) # not of length 3 anymore
\end{Sinput}
\begin{Soutput}
[1] 5
\end{Soutput}
\end{Schunk}
The warning comes from the fact that \code{c:(c - n + 1)} is not a sequence of length \code{n}
anymore for large number of classes. More severe problems start to appear for larger number of classes.
\begin{Schunk}
\begin{Sinput}
> stopifnot(pbirthday(3, classes = 2^56) == 0) # numerically even 0 (wrong)
\end{Sinput}
\end{Schunk}
For a numerically robust treatment of the probability of at least one collision, %
we can use a similar trick as before. If $b$ denotes the number of buckets (with $b=2^k$ here) as before,
then
\begin{align*}
  \P(C\ge 1)&=1-\prod_{i=1}^{n-1}\biggl(1-\frac{i}{b}\biggr)=1-\exp\biggl(\log\biggr(\prod_{i=1}^{n-1}\biggr(1-\frac{i}{b}\biggr)\biggr)\biggr) = 1-\exp\biggl(\,\sum_{i=1}^{n-1}\log\biggl(1-\frac{i}{b}\biggr)\biggr)\\
  &=1-\exp\biggl(\,\sum_{i=1}^{n-1}\logp\biggl(-\frac{i}{b}\biggr)\biggr)=-\expm\biggl(\,\sum_{i=1}^{n-1}\logp\biggl(-\frac{i}{b}\biggr)\Bigr).
\end{align*}
Using this, we can implement an improved version of \code{pbirthday()} in our case.
\begin{Schunk}
\begin{Sinput}
> pbirthday2 <- function(n, classes = 365)
+     if(n >= 2) {
+         if(n > classes) { # pigeonhole principle
+             1
+         } else { # numerically stable evaluation
+             -expm1(sum(log1p(-(1:(n-1))/classes)))
+         }
+     } else 0 # no collision possible if n <= 1
\end{Sinput}
\end{Schunk}
We now obtain:
\begin{Schunk}
\begin{Sinput}
> ## Simple check
> stopifnot(pbirthday2(3, classes = 2) == pbirthday(3, classes = 2)) # identical
> ## For 2^53 classes, we already only have numerical equality
> stopifnot(all.equal(pbirthday2(3, classes = 2^53),
+                     pbirthday (3, classes = 2^53))) # numerically equal
> ## The following provide different results than pbirthday() before
> pbirthday2(3, classes = 2^54) # fine; note how the value differs from before
\end{Sinput}
\begin{Soutput}
[1] 1.665335e-16
\end{Soutput}
\begin{Sinput}
> pbirthday2(n, classes = 2^64) # fine; note how the value differs from before
\end{Sinput}
\begin{Soutput}
[1] 2.710503e-08
\end{Soutput}
\end{Schunk}
Figure~\ref{fig:rel:err:pbirthday} shows the relative error when computing the probability of at least one collision with
\code{pbirthday()} in comparison to \code{pbirthday2()} for $n=10^6$ and $b=2^k$ for $k\in\{32,33,\dots,64\}$;
values equal to 0 for small such $k$ are omitted as the y-axis is plotted in logarithmic scale.
\begin{Schunk}
\begin{Sinput}
> y1 <- sapply(k, function(k.) pbirthday (n, classes = 2^k.))
> y2 <- sapply(k, function(k.) pbirthday2(n, classes = 2^k.))
> err <- abs(y2-y1) / y2 # relative error in comparison to y2
> err[err == 0] <- NA # for plotting in log-scale
> plot(k, err, type = "l", log = "y", xlab = "Number k of bits", ylab = paste(
+      "Relative error between probabilities of collision for sample size ", n))
\end{Sinput}
\end{Schunk}
\setkeys{Gin}{width=\textwidth}
\begin{figure}[htbp]
  \centering
  \begin{minipage}{0.48\textwidth}
\begin{Schunk}

\includegraphics[width=\maxwidth]{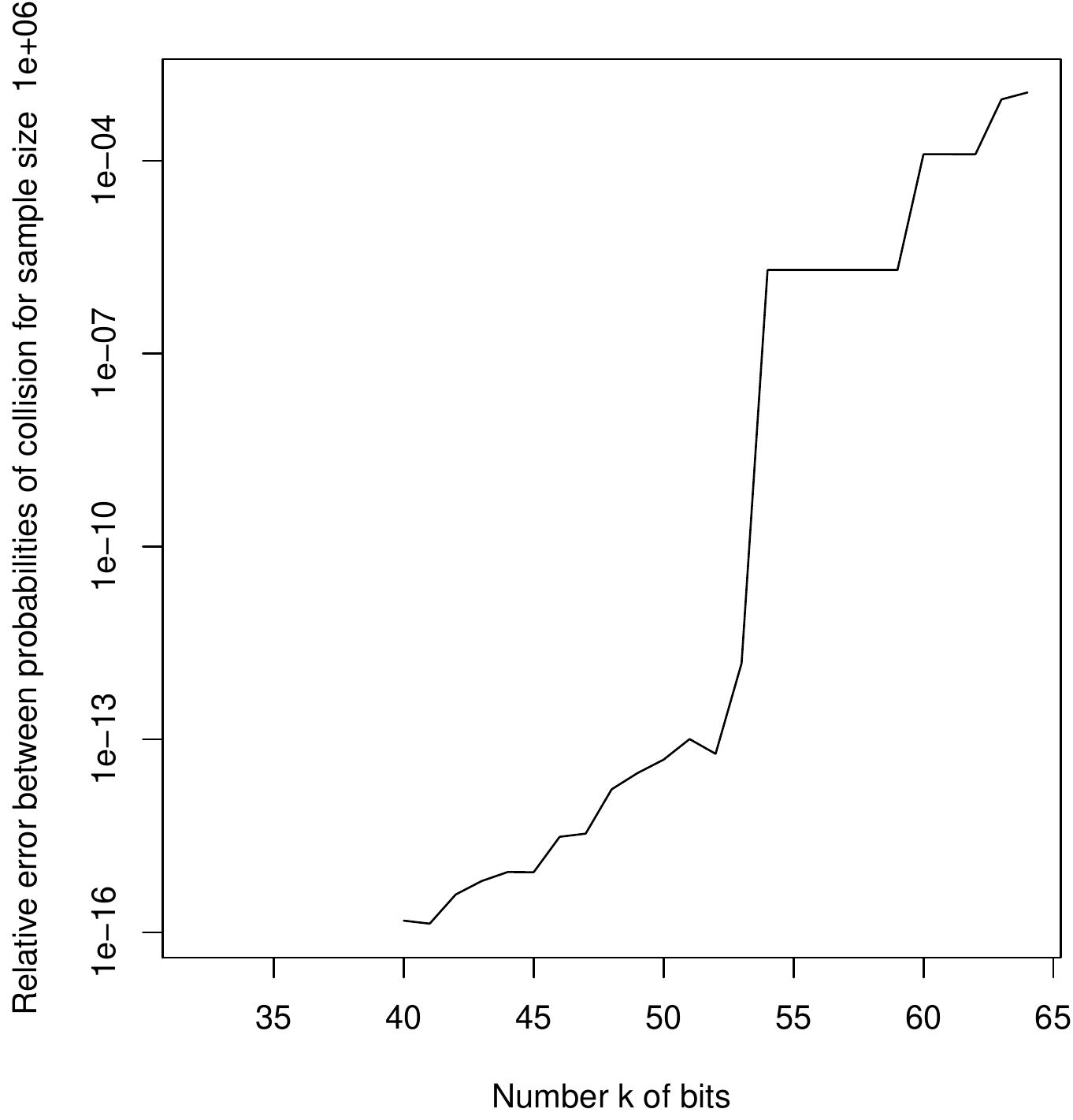} \end{Schunk}
  \end{minipage}%
  \caption{Relative error of the probability of at least one collision computed with
   \code{pbirthday()} in comparison to \code{pbirthday2()} for $n=10^6$ and $b=2^k$ for $k\in\{32,33,\dots,64\}$.}
  \label{fig:rel:err:pbirthday}
\end{figure}

\section{Conclusion}\label{sec:concl}
Rather unintuitively (but, as it turns out, not unexpectedly), generating
random numbers can lead to collisions. For generating $n$ random numbers in a
$k$-bit setup, the expected number of collisions is $n - 2^k (1
- (1 - 2^{-k})^n)$. We derived a numerically robust formula to
compute this number. For a 32-bit setup as is still used in random
number generators to this day (be it for historical reasons, reproducibility or due to run time),
the expected number of collisions when generating one million random numbers is about 116.
We also saw that for larger $k$, this number quickly becomes negligibly small.
Furthermore, we provided a numerically stable treatment of the probability of collision.

There are several important lessons to take away. First, RNGs are not perfect as
they are affected by the representation of numbers on a computer, typically in double precision.
Second, even when evaluating mathematically innocent formulas such as
$n - 2^k (1 - (1 - 2^{-k})^n)$ or $1-\prod_{i=1}^{n-1}(1-\frac{i}{b})$, one needs to keep an eye on numerical
issues. Third, the numerical behavior of mathematically equivalent functions may
be quite different; see Figure~\ref{fig:exp:num:coll}.  Fourth, numerical
issues are typically best detected when plotting a formula as a function of (at
least some of) its input variables. Last but not least, numerically robust
functions such as \code{expm1()} or \code{log1p()} are often helpful for obtaining
numerically robust evaluations of such formulas.

\subsection*{Acknowledgments}
I would like to thank Professor Ivan\ Kojadinovic (Universit\'e\ de\ Pau\ et\ des\ Pays\ de\ l'Adour)
and Professor Gianfausto Salvadori (Universit\`a del Salento) for motivating this problem. I would
also like to thank the Editor Daniel R.\ Jeske, an Associate Editor and two anonymous reviewers
for their insightful feedback.

\appendix

\section*{Appendix: How expm1() works}
The reader might wonder how, for example, \code{expm1()} actually works. In \R\ versions less than
3.5.0 (3.4.4 was the last official such release),
C~code of this function was found in \code{./src/nmath/expm1.c} of the \R\ sources
available on \url{https://cran.r-project.org/}; in later versions, \code{expm1()} directly stems
from the GNU Compiler Collection (see \url{https://gcc.gnu.org/}, then Git read access, then \code{./libgo/go/math/expm1.go})
and utilizes a more involved algorithm. %
Here is the C~code of \R\ version 3.4.4, where \code{DBL_EPSILON} has the meaning of the aforementioned \code{.Machine$double.eps}.
\begin{otherinput}
double expm1(double x)
{
    double y, a = fabs(x);

    if (a < DBL_EPSILON) return x;
    if (a > 0.697) return exp(x) - 1;  /* negligible cancellation */

    if (a > 1e-8)
	y = exp(x) - 1;
    else /* Taylor expansion, more accurate in this range */
	y = (x / 2 + 1) * x;

    /* Newton step for solving   log(1 + y) = x   for y : */
    /* WARNING: does not work for y ~ -1: bug in 1.5.0 */
    y -= (1 + y) * (log1p (y) - x);
    return y;
}
\end{otherinput}
For $|x|>0.697$, $\exp(x)-1$ is computed in a straightforward manner. For $|x|\in(10^{-8},\ 0.697]$,
also $\exp(x)-1$ is computed, but an additional Newton step added. Note that $\exp(x)-1=y$
if and only if $\log(1+y)-x = 0$ and the Newton step thus aims at improving the root of $f(y)=\log(1+y)-x$
via $y_{n+1}=y_n-f(y_n)/f'(y_n)=y-(1+y)(\logp(y)-x)$.
For $|x|\in[\text{\code{.Machine$double.eps}},\ 10^{-8}]$, on uses the approximation $\exp(x)-1\approx (x/2+1) x= x^2/2 + x$
and a Newton step as before. And for even smaller $x$, one simply uses $\exp(x)-1\approx x$. The left-hand side of
Figure~\ref{fig:expm1:log:exp} illustrates these different regions of approximation.
\begin{Schunk}
\begin{Sinput}
> m <- -26 # exponent of smallest x-value
> x <- 10^seq(m, 0) # sequence of x values
> y <- exp(x)-1 # numerically critical for |x| ~= 0
> y. <- expm1(x) # numerically stable
> plot(x, y., type = "l", log = "xy", ylab = "")
> lines(x, y, col = adjustcolor("black", alpha.f = 0.3), lwd = 6)
> legend("topleft", bty = "n", lty = c(1,1), lwd = c(4,1),
+        col = c(adjustcolor("black", alpha.f = 0.3), "black"),
+        legend = c("exp(x)-1", "expm1(x)"))
> x3 <- log(0.697, base = 10) # ~= -0.1567672 # threshold 3
> x2 <- -8 # threshold 2
> x1 <- log(.Machine$double.eps, base = 10) # threshold 1; ~= -15.65356
> abline(v = 10^c(x1, x2, x3), lty = 3) # vertical dotted lines
> ypos <- 10^(m + 2.5) # y-value for all labels
> text(3, ypos, labels = "exp(x)-1", srt = 90)
> text(10^mean(c(x2, x3)), ypos,
+      labels = "y = exp(x)-1\n+ Newton step\nfor solving f(y)\n=log(1+y)-x=0")
> text(10^mean(c(x1, x2)), ypos,
+      labels = "y = (1+x/2)x\n+ Newton step\nfor solving f(y)\n=log(1+y)-x=0")
> text(10^mean(c(m, x1)), ypos/9, labels = "x") # x
\end{Sinput}
\end{Schunk}

The right-hand side of Figure~\ref{fig:expm1:log:exp} shows a similar numerical issue
when computing $\log(\exp(x))$ for $x$ near $0$. This demonstrates that it is rarely a
good idea to compute the logarithm of a function $f$ as just $\log(f(x))$ and is indeed
 the reason why many densities in \R\ have an argument to allow for a numerically
 proper computation of their logarithms (which is of course needed for likelihood-based inference);
 see, for example, the argument \code{log} of \code{dnorm()}.
\begin{Schunk}
\begin{Sinput}
> plot(x, x, type = "l", log = "xy", ylab = "")
> lines(x, log(exp(x)), col = adjustcolor("black", alpha.f = 0.3), lwd = 6)
> legend("topleft", bty = "n", lty = c(1,1), lwd = c(4,1),
+        col = c(adjustcolor("black", alpha.f = 0.3), "black"),
+        legend = c("log(exp(x))", "x"))
\end{Sinput}
\end{Schunk}
\setkeys{Gin}{width=\textwidth}
\begin{figure}[htbp]
\centering
  \begin{minipage}{0.48\textwidth}
\begin{Schunk}

\includegraphics[width=\maxwidth]{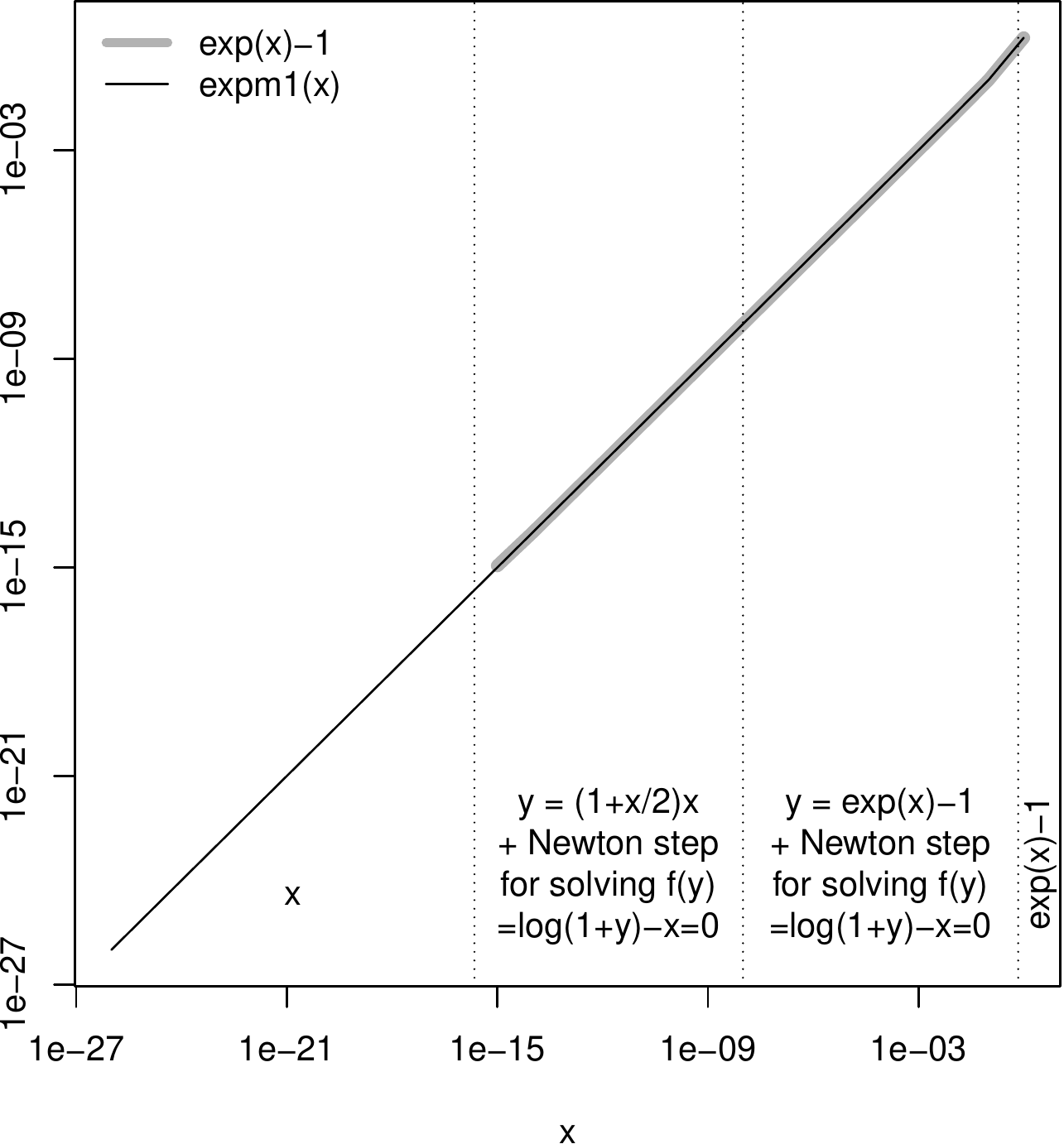} \end{Schunk}
  \end{minipage}%
  \hfill
  \begin{minipage}{0.48\textwidth}
\begin{Schunk}

\includegraphics[width=\maxwidth]{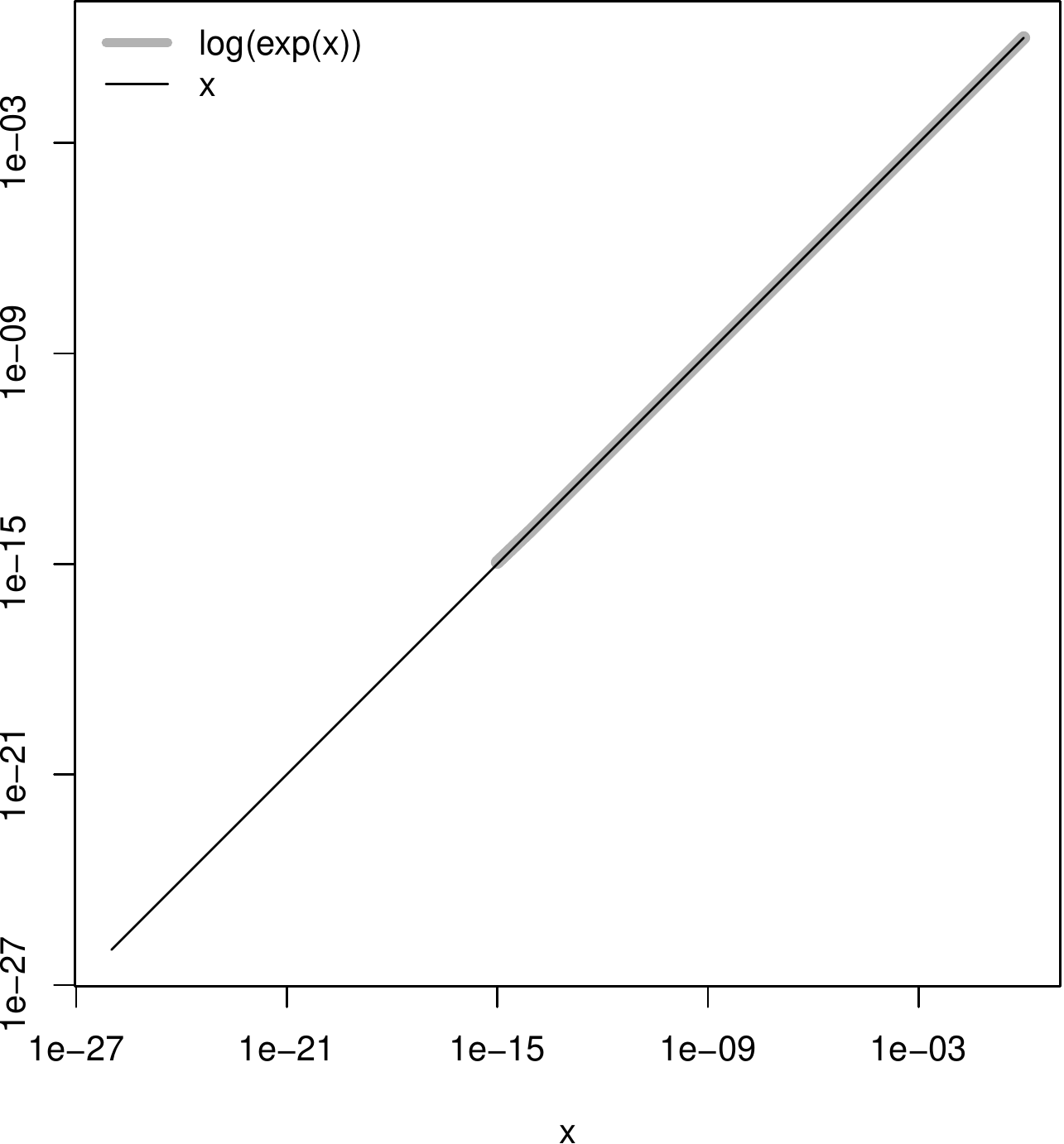} \end{Schunk}
  \end{minipage}
  \caption{Numerical issue when computing $\exp(x)-1$ near $0$ and how \code{expm1()} worked in \R\ version 3.4.4 (left). Numerical issue when computing $\log(\exp(x))$ near $0$ (right).}
\label{fig:expm1:log:exp}
\end{figure}

Also when generating (non-uniform) random variates, numerical issues can appear, for example,
when generating random variates from a gamma distribution. The smaller the shape parameter,
the more probability mass is pushed towards $0$ at which point those random variates smaller
than the smallest positive representable number get truncated to $0$ (and thus produce collisions):
\begin{Schunk}
\begin{Sinput}
> set.seed(271)
> rgamma(10, shape = 1e-3) == 0
\end{Sinput}
\begin{Soutput}
 [1]  TRUE  TRUE FALSE FALSE FALSE FALSE  TRUE FALSE  TRUE  TRUE
\end{Soutput}
\end{Schunk}

\printbibliography[heading=bibintoc]
\end{document}

%
%
%
%
